\newtheorem{definition}{Definition}
\newtheorem{theorem}{Theorem}
\newcommand{\captionfonts}{\footnotesize}
\long\def\@makecaption#1#2{%
  \vskip\abovecaptionskip
  \sbox\@tempboxa{{\captionfonts #1: #2}}%
  \ifdim \wd\@tempboxa >\hsize
    {\captionfonts #1: #2\par}
  \else
    \hbox to\hsize{\hfil\box\@tempboxa\hfil}%
  \fi
  \vskip\belowcaptionskip}
\begin{document}
\title{Entanglement of Conceptual Entities in \\ Quantum Model Theory (QMod)}
\author{Diederik Aerts and Sandro Sozzo \vspace{0.5 cm} \\ 
        \normalsize\itshape
        Center Leo Apostel for Interdisciplinary Studies \\
        \normalsize\itshape
        Brussels Free University \\ 
        \normalsize\itshape
         Krijgskundestraat 33, 1160 Brussels, Belgium \\
        \normalsize
        E-Mails: \url{diraerts@vub.ac.be,ssozzo@vub.ac.be} \\
               }
\date{}
\maketitle
\begin{abstract}
\noindent
We have recently elaborated \emph{Quantum Model Theory} (\emph{QMod}) to model situations where the quantum effects of contextuality, interference, superposition, entanglement and emergence, appear without the entities giving rise to these situations having necessarily to be of microscopic nature. We have shown that QMod models without introducing linearity for the set of the states. In this paper we prove that QMod, although not using linearity for the state space, provides a method of identification for entangled states and an intuitive explanation for their occurrence. We illustrate this method for entanglement identification with concrete examples.
\end{abstract}

\medskip
{\bf Keywords}: Quantum cognition, concept combination, QMod Theory, entanglement

\vspace*{-0.2cm}
\section{Introduction\label{intro}}
We have recently presented \emph{Quantum Model Theory} (\emph{QMod}) \cite{aertssozzo2012a}, a modeling theory worked out to describe situations entailing effects, such as, \emph{interference}, \emph{contextuality}, \emph{emergence} and \emph{entanglement}, which are typical of the micro-world but also occur at macroscopic level and even outside physics \cite{aerts2009,aerts2009b,aerts2010,aerts2010b}. QMod rests on a generalization of the standard Hilbert space quantum formalism, namely the \emph{State Context Property} (\emph{SCoP}) formalism \cite{aerts2002}, developed in Brussels when investigating the structure of concepts, and how they combine to form sentences and texts \cite{aertsgabora2002,aertsgabora2005a,aertsgabora2005b}. The SCoP formalism was further used to analyze aspects of concepts and inspired contextual approaches \cite{aertsgabora2002,gabora2007,nelson2007,gaboraroschaerts2008,flenderkittobruza2009,gaboraaerts2009,dhooghe2010,aertsczachorsozzo2010,velozgaboraeyjolfsonaerts2011,aertssozzo2011}. However, the SCoP formalism is very general, hence QMod has been developed to be a formalism closer to the complex Hilbert space of standard quantum theory but, at the same time, general enough to cope with the modeling of the main quantum effects identified in the domains different from the micro-world.

QMod makes it possible to describe not only concepts and their combinations, but any kind of entity in which the above quantum effects play a relevant role. Furthermore, it is a generalization of classical and quantum theory in a very similar way to how the relativistic manifold formalism is a generalization of special relativity and of Newtonian physics in space time.

In this paper we focus on entanglement and emergence, and show that these effect find a very natural description in QMod. We first introduce in Sec. \ref{sectionrepresentation} a representation theorem which shows how one can construct a real or complex representation for a general entity. Then, we apply this theorem to model two specific examples in Sec. \ref{entanglement}. In the first example, we consider the concept \emph{The Animal Acts}, which is a combination of the concepts \emph{Animal} and \emph{Acts}. By using the experimental data collected in \cite{aertssozzo2011} we analyze the entanglement between these two concepts (\ref{ent_conc}). Finally, we consider the entity {\it Vessel of Water}, and show that states of this entity can be prepared which are not product states, i.e. they are entangled (Sec. \ref{ent_vessels}).

\vspace*{-0.2cm}
\section{A representation theorem\label{sectionrepresentation}}
In this section we resume the essentials of the representation theorem proved in detail in \cite{aertssozzo2012a} that are needed to attain our results in the following sections. Let us begin with the abstract description of an entity in QMod. An entity is a collection of aspects of reality that hang together in such a way that different states exist without loosing the possibility of identification of the same entity in each of these states. Sometimes only one state exists, this is then the limiting case, and the entity is then just a situation.

\begin{definition}\label{def}
We consider an entity $S$ that can be in different states, and denote states by $p, q, \ldots$, and the set of states by $\Sigma$. Different measurements can be performed on the entity $S$ being in one of its states, and we denote measurements by $e, f, \ldots$, and the set of measurements by ${\cal M}$. With a measurement $e \in {\cal M}$ and the entity in state $p$, corresponds a set of possible outcomes $\{x_1, x_2, \ldots, x_j, \ldots, x_n\}$, and a set of probabilities $\{\mu(x_j,e,p)\}$, where $\mu(x_j,e,p)$ is the limit of the relative frequency of the outcome $x_j$, the situation being repeated where measurement $e$ is executed and the entity $S$ is in state $p$. We denote the final state corresponding to the outcome $x_j$ by means of $p_j$.
\end{definition}
Let us now come to the representation theorem. It states that it is always possible to realize the situation in Def. \ref{def} by means of a specific mathematical structure using a space of real numbers where the probabilities are derived as Lebesgue measures of subsets of real numbers. Moreover, a complex number realization exists as well, where the probabilities are calculated by making use of a scalar product similar to the one used in the quantum formalism \cite{aertssozzo2012a}.

\begin{theorem} \label{theoremrepresentation}
Consider a measurement $e \in {\cal M}$ and a state $p \in \Sigma$, and the set of probabilities $\{\mu(x_j, e, p)\}$, where $\{x_1, \ldots, x_j, \ldots, x_n\}$ is the set of possible outcomes given $e$ and $p$, then it is possible to work out a representation of this situation in $\mathbb{R}^n$ where the probabilities are given by Lebesgue measures of appropriately defined subsets of $\mathbb{R}^n$, and a representation in $\mathbb{C}^m$ where the measurement is modeled within the mathematical formalism of standard quantum theory defined on $\mathbb{C}^m$ as a complex Hilbert space.
\end{theorem}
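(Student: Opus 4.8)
The plan is to treat the two representations separately, since for a single pair $(e,p)$ the probabilities $\mu_j := \mu(x_j,e,p)$ form a finite distribution, $\mu_j \ge 0$ and $\sum_{j=1}^n \mu_j = 1$, so that each representation amounts to encoding this one distribution in a space of the appropriate kind. I would build each construction explicitly and then check that the prescribed probability rule reproduces the numbers $\mu_j$.

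For the real representation in $\mathbb{R}^n$ I would realize the distribution geometrically inside the standard $(n-1)$-simplex
$$\Delta = \Big\{(\lambda_1,\dots,\lambda_n)\in\mathbb{R}^n : \lambda_j \ge 0,\ \textstyle\sum_{j=1}^n \lambda_j = 1\Big\},$$
equipped with its $(n-1)$-dimensional Lebesgue measure. The state $p$ is represented by the point $\vec{\mu} = (\mu_1,\dots,\mu_n)\in\Delta$, i.e. by the point whose barycentric coordinates are exactly the outcome probabilities. The measurement $e$ is then encoded by the partition obtained by coning $\vec{\mu}$ to the facets of $\Delta$: let $A_j$ be the sub-simplex spanned by $\vec{\mu}$ together with all vertices of $\Delta$ except the $j$-th. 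The key identity to verify is the barycentric volume-ratio formula $\mathrm{Leb}(A_j)/\mathrm{Leb}(\Delta) = \mu_j$, which holds because the height of $\vec{\mu}$ over the facet opposite the $j$-th vertex equals $\mu_j$ times the corresponding height of that vertex, while the two simplices share that facet as base. Reading the construction as a hidden-measurement model --- a variable $\lambda$ drawn uniformly on $\Delta$ yields outcome $x_j$ precisely when $\lambda \in A_j$, with final state $p_j$ represented by the $j$-th vertex --- reproduces $\mu_j$ as the relative Lebesgue measure of $A_j$, which is exactly what the theorem demands.

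For the complex representation it suffices to take $m = n$ and work in $\mathbb{C}^n$ with a fixed orthonormal basis $\{|b_1\rangle,\dots,|b_n\rangle\}$. I would model the measurement $e$ by the self-adjoint operator $\sum_{j=1}^n x_j\, |b_j\rangle\langle b_j|$, whose spectral projectors are $P_j = |b_j\rangle\langle b_j|$, and the state $p$ by the unit vector $|\psi\rangle = \sum_{j=1}^n \sqrt{\mu_j}\,|b_j\rangle$. The Born rule then gives $\langle\psi|P_j|\psi\rangle = |\langle b_j|\psi\rangle|^2 = \mu_j$, while the collapsed vector $P_j|\psi\rangle/\|P_j|\psi\rangle\| = |b_j\rangle$ represents the final state $p_j$; this places the situation squarely inside the standard Hilbert-space formalism of quantum theory, as required.

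The routine parts are the two encodings themselves; the part that needs genuine care is the volume-ratio identity underlying the real model, which is the geometric heart of the argument and must be established for arbitrary $n$ rather than merely for the triangle. I would also flag two points of flexibility that the statement deliberately leaves open: the phases of the amplitudes $\sqrt{\mu_j}$ are completely free, so the complex representation is far from unique, and one may take $m = n$ here only because a single measurement is at stake --- accommodating the whole family $\mathcal{M}\times\Sigma$ within one fixed space, which can force a strictly larger $m$, is the substantive difficulty addressed in the full development in \cite{aertssozzo2012a}.
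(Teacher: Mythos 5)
Your proposal is correct and follows essentially the same route as the paper: the real representation is exactly the paper's hidden-measurement construction (the point $v(e,p)$ in the simplex, the sub-simplices $A_j$ coned over the opposite facets, and the volume-ratio identity $m(A_j)/m(S_n(e))=\mu(x_j,e,p)$), and your complex representation is the $m=n$, rank-one-projector special case of the paper's family $n\le m\le n^2$, which you correctly flag as sufficient for a single measurement. No gaps.
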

We sketch the construction in Th. \ref{theoremrepresentation} with the aim to see how it can be used in specific cases, as follows. We introduce the space $\mathbb{R}^n$, and its canonical basis $h_1=(1, \ldots, 0, \ldots, 0)$, $h_2=(0, 1, 0, \ldots, 0)$, \ldots, $h_j=(0, \ldots, 1, \ldots)$, \ldots, $h_n=(0, \ldots, 1)$. The situation of the measurement $e$ and state $p$ can be represented by the vector
\begin{equation}
v(e,p)=\sum_{j=1}^n\mu(x_j,e,p)h_j
\end{equation}
which is a point of the simplex $S_n(e)$, the convex closure of the canonical basis $\{h_1, \ldots, h_j, \ldots, h_n\}$ in $\mathbb{R}^n$. We call $A_j(e,p)$ the convex closure of the vectors $\{h_1, h_2, \ldots$ $, h_{j-1}, v(e,p), h_{j+1}, \ldots, h_n\}$. We use this configuration to construct a micro-dynamical model for the measurement dynamics of $e$ for the entity in state $p$. This micro-dynamics is defined as follows, a vector $\lambda$ contained in the simplex $S_n(e)$, hence we have
\begin{equation}
\lambda=\sum_{j=1}^n\lambda_jh_j \quad 0 \le \lambda_j \le 1 \quad \sum_{j=1}^n\lambda_j=1
\end{equation}
determines the dynamics of the measurement $e$ on the state $p$ in the following way. If $\lambda \in A_j(e,p)$, and is not one of the boundary points (hence $\lambda$ is contained in the interior of $A_j(e,p)$), then the measurement $e$ gives with certainty, hence deterministically, rise to the outcome $x_j$, with the entity being in state $p$. If $\lambda$ is a point of the boundary of $A_j(e,p)$, then the outcome of the experiment $e$, the entity being in state $p$, is not determined. The probabilities $\mu(x_j, e, p)$ can then be derived from Lebesgue measuring the sets of relevant real numbers as subsets of $S_n(e)$. Indeed, as we have formulated the micro-dynamics of the measurement process $e$ for $S$ being in state $p$, we have that the $\mu(x_j, e, p)$, being the probability to obtain outcome $x_j$, is given by the Lebesgue measure of the set of vectors $\lambda$ that are such that this outcome is obtained deterministically, hence this are the $\lambda$ contained in $A_j(e,p)$, divided by the Lebesgue measure of the total set of vectors $\lambda$, which are the $\lambda$ contained in $S_n(e)$. This means that
\begin{equation} \label{probabilitylebesgue}
\mu(x_j, e, p)={m(A_j(e,p)) \over m(S_n(e))} .
\end{equation}
The right hand of \ref{probabilitylebesgue} can be evaluated as in \cite{aertssozzo2012a,aerts1986}.

Let us now come to the quantum representation. We introduce a set of orthogonal projection operators $\{Mk\ \vert k= 1,\ldots,n\}$ on a complex Hilbert $\mathbb{C}^m$ space, with $n \le m \le n^2$, that form a spectral family. This means that $M_k \perp M_l$ for $k\not=l$ and $\sum_{k=1}^nM_k=\mathbbmss{1}$, and we take the $M_k$ such that they are diagonal matrices in $\mathbb{C}^m$. More concretely, each $M_k$ is a matrix with 1's at some of the diagonal places, and zero's everywhere else. The number of 1's is between 1 and $n$, for each $M_k$, and the collections of 1's hang together, their mutual intersections being empty, and the union of all of them being equal to the collection of 1's of the unit matrix $\mathbbmss{1}$. The state is represented by a vector $w(e,p)$ of $\mathbb{C}^m$, such that
\begin{equation} \label{quantumsolution}
\mu(x_k, e, p)=\langle w(e,p)\ \vert M_k\ \vert w(e,p)\rangle
\end{equation}   
A possible solution is 
\begin{equation}
w(e,p)=\sum_{j=1}^m a_je^{i\alpha(e,p)_j}h_j \quad{\rm with} \quad a_j={1 \over b}\sqrt{\mu(x_j, e, p)}
\end{equation}
where $h_j$ is the canonical basis of $\mathbb{C}^m$, and $b$ is the dimension of the projector $M_k$ if $h_j$ is such that $M_kh_j=h_j$. But this is not the only solution, and it might also not be the appropriate solution for the situation we want to model. It shows however that a solution exists, which proves that it is always possible to built this local quantum model.

The above theorem is an application of the \emph{hidden measurement approach} that we elaborated in our Brussels research group in the eighties and nineties of the foregoing century, with the aim of formulating a contextual hidden variable model for quantum theory \cite{aerts1986,aerts1993,aerts1994,aerts1995,aertsaerts1997,aertsaertscoeckedhooghedurtvalckenborgh1997,aertsaertsdurtleveque1999,aertssven2002,aertssven2005,aertssven2006}.

With the above theorem we have constructed a representation of the collection of states and experiments that lead to the same set of outcomes. In this sense, the $\mathbb{R}^n$ model and the $\mathbb{C}^m$ that we have constructed is a model for the interaction between state and experiment. The set of outcomes constitutes a context in which this interaction takes place. In the next section we investigate in detail the examples to show the relevance of our representation theorem for the modeling of entanglement.

\vspace*{-0.2cm}
\section{Entanglement in QMod\label{entanglement}}
The representation theorem of QMod stated in Sec. \ref{sectionrepresentation} can be applied to specific entities and situations to show that entanglement, hence quantum structures, appear if suitable conditions are satisfied.

\subsection{Entanglement of two concepts\label{ent_conc}}
The first example that we take into account is a combination of two concepts. Let us consider the example of the entity which is the concept {\it Animal}, and let $e$ be a measurement where a person is asked to choose between the animal being a {\it Horse} or a {\it Bear}, hence $e$ is associated with two outcomes $\{H,B\}$. We consider only one state for {\it Animal}, namely the ground state which is the state where animal is just animal, i.e. the bare concept, and let us denote it by $p$. Let us denote by $\mu(H,e,p)$ the probability that {\it Horse} is chosen when $e$ is performed, and by $\mu(B,e,p)$ the probability that {\it Bear} is chosen in the same measurement. The following mathematical construction can now be elaborated. 

For the measurement $e$ we consider the vector space $\mathbb{R}^2$ and its canonical basis $\{(1, 0), (0, 1)\}$. The state $p$ is contextually represented with respects to the measurement $e$ by the vector $v(e,p)=(\mu(H,e,p),\mu(B,e,p))$ in $\mathbb{R}^2$. We introduce the vector $\lambda=(r, 1-r)$, with $0 \le r \le 1$, such that for $(r, 1-r)$ contained in the convex closure of $(1, 0)$ and $(\mu(H,e,p),\mu(B,e,p))$, we get outcome {\it Bear}, while for $(r, 1-r)$ contained in the convex closure of $(\mu(H,e,p),\mu(B,e,p))$ and $(0, 1)$ we get {\it Horse}. Let us calculate the respective lengths and see that we find back the correct probabilities. Denoting the length of the piece of line from $(1, 0)$ to $(\mu(H,e,p),\mu(B,e,p))$ by $d$, we have ${d \over \sqrt{2}}=\mu(B,e,p)$, and ${\sqrt{2}-d \over \sqrt{2}}=\mu(H,e,p)$. 

We can also construct a quantum mathematics model in $\mathbb{C}^2$. Therefore we consider the vector $w(e,p)=(\sqrt{\mu(H,e,p)}e^{i\alpha(e,p)_H},\sqrt{\mu(B,e,p)}e^{i\alpha(e,p)_B})$ in $\mathbb{C}^2$. We have $\mu(H,e,p)=|\langle (1,0)| w(e,p)\rangle|^2$ and $\mu(B,e,p)=|\langle (0,1)| w(e,p)\rangle|^2$, which shows that also the $\mathbb{C}^2$ construction gives rise to the correct probabilities. 

Now, we want to introduce explicitly the data that we collected in an experiment that we performed on test subjects and that is described in detail in \cite{aertssozzo2011}. Of the 81 persons that we asked to choose between {\it Horse} and {\it Bear} as good exemplars of the concept {\it Animal}, 43 chose for {\it Horse}, and 38 for {\it Bear}. Calculating the relative frequencies gives rise to probabilities $\mu(H,e,p)=0.53$ and $\mu(B,e,p)=0.47$. Hence
\begin{equation}
v(e,p)=(0.53,0.47) \quad w(e,p)=(0.73\ e^{i\alpha(e,p)_H}, 0.68\ e^{i\alpha(e,p)_B})
\end{equation}
are the vectors that respectively represent the state of the concept {\it Animal} with respect to this measurement and these data respectively in $\mathbb{R}^2$ and in $\mathbb{C}^2$.

We consider now the entity which is the concept {\it Acts}, where \emph{Acts} denotes here the action of emitting a sound, and the measurement $f$, where a person is invited to choose between {\it Growls} or {\it Whinnies}. Hence we have two outcomes $\{G,W\}$. Also for the concept {\it Acts} we consider only one state, the ground state, which we denote by $q$. The probabilities $\mu(G,f,q)$ and $\mu(W,f,q)$ are respectively the probability that {\it Growls} is chosen when $f$ is performed, and the probability that {\it Whinnies} is chosen in the same experiment. We again make the construction in $\mathbb{R}^2$ and $\mathbb{C}^2$ for the respective probabilities, giving rise to the vectors $v(f,q)=(\mu(G,f,q),\mu(W,f,q))$ and $w(f,q)=(\sqrt{\mu(G,f,q)}e^{i\alpha(f,q)_G},\sqrt{\mu(W,f,q)}e^{i\alpha(f,q)_W})$. The respective constructions allow one to reproduce the correct probabilities also in this case.  

Turning again to the data collected in the experiment described in \cite{aertssozzo2011}, of the 81 persons there were 39 choosing {\it Growls} and 42 choosing {\it Whinnies}. This leads to $\mu(G,f,q)=0.48$ and $\mu(W,f,q)=0.52$. Hence the vectors
\begin{equation}
v(f,q)=(0.48,0.52) \quad w(f,q)=(0.69\ e^{i\alpha(f,q)_G}, 0.72\ e^{i\alpha(f,q)_W})
\end{equation}
are the vectors that respectively represent the state of the concept {\it Acts} with respect to this measurement and the collected data in $\mathbb{R}^2$ and in $\mathbb{C}^2$, respectively.

We consider now the combination of both entities, hence the conceptual combination {\it The Animal Acts}, and again only one state, namely its ground state, which we denote $r$. Let $g$ be an experiment with four possible outcomes, namely {\it Horse} and {\it Growls} are chosen, {\it Horse} and {\it Whinnies} are chosen, {\it Bear} and {\it Growls} are chosen, or {\it Bear} and {\it Whinnies} are chosen. The set of possible outcomes is then $\{HG, HW, BG, BW\}$, and the corresponding probabilities are $\mu(HG,g,r)$, $\mu(HW,g,r)$, $\mu(BG,g,r)$ and $\mu(BW,g,r)$. 

If we develop the mathematical construction explained in our representation theorem, we need to consider $\mathbb{R}^4$, and $\mathbb{C}^4$ and the corresponding simplex in $\mathbb{R}^4$. This is the crucial aspect that makes it possible to model entanglement, as our analysis will show.

We firstly recall that $\mathbb{R}^4$ is isomorphic to $\mathbb{R}^2 \otimes \mathbb{R}^2$, and $\mathbb{C}^4$ is isomorphic to $\mathbb{C}^2 \otimes \mathbb{C}^2$, and it are these isomorphisms that allow the modeling of entanglement in a straightforward way. The canonical basis of $\mathbb{R}^2 \otimes \mathbb{R}^2$ and of $\mathbb{C}^2 \otimes \mathbb{C}^2$ is
\begin{equation}
h_1=(1,0)\otimes(1,0) \quad h_2=(1,0)\otimes(0,1) \quad h_3=(0,1)\otimes(1,0) \quad h_4=(0,1)\otimes(0,1)
\end{equation}
Hence, we have
\begin{eqnarray} \label{entangledvectorreal}
&v(g,r)=\mu(HG,g,r)h_1+\mu(HW,g,r)h_2+\mu(BG,g,r)h_3+\mu(BW,g,r)h_4 \\ \label{entangledvectorcomplex}
&w(g,r)=\sqrt{\mu(HG,g,r)}e^{\alpha(g,r)_{HG}}h_1+\sqrt{\mu(HW,g,r)}e^{\alpha(g,r)_{HW}}h_2 \nonumber \\
&+\sqrt{\mu(BG,g,r)}e^{\alpha(g,r)_{BG}}h_3+\sqrt{\mu(BW,g,r)}e^{\alpha(g,r)_{BW}}h_4
\end{eqnarray}
and can prove the following theorem.
\begin{theorem}
$v(g,r)$ equals the product state $v(e,p)\otimes v(f,q)$ (and then also $w(g,r)$ equals the product state $w(e,p)\otimes w(f,q)$) iff the probabilities satisfy
\begin{eqnarray} \label{productprobabilities01}
&\mu(HG,g,r)=\mu(H,e,p)\mu(G,f,q) \quad \mu(HW,g,r)=\mu(H,e,p)\mu(W,f,q) \\ \label{productprobabilities02}
&\mu(BG,g,r)=\mu(B,e,p)\mu(G,f,q) \quad \mu(BW,g,r)=\mu(B,e,p)\mu(W,f,q)
\end{eqnarray}
\end{theorem}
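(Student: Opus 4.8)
The plan is to reduce the biconditional to a coordinate comparison in the canonical basis $\{h_1,h_2,h_3,h_4\}$, exploiting the bilinearity of the tensor product. First I would expand $v(e,p)\otimes v(f,q)$ explicitly. Writing $v(e,p)=\mu(H,e,p)(1,0)+\mu(B,e,p)(0,1)$ and $v(f,q)=\mu(G,f,q)(1,0)+\mu(W,f,q)(0,1)$ and distributing the product over the four cross terms, each term $(1,0)\otimes(1,0)$, $(1,0)\otimes(0,1)$, etc., is exactly one of the basis vectors $h_1,\ldots,h_4$, so that
\begin{align}
v(e,p)\otimes v(f,q) &= \mu(H,e,p)\mu(G,f,q)h_1 + \mu(H,e,p)\mu(W,f,q)h_2 \nonumber \\
&\quad + \mu(B,e,p)\mu(G,f,q)h_3 + \mu(B,e,p)\mu(W,f,q)h_4 .
\end{align}

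Next I would place this beside the expression (\ref{entangledvectorreal}) for $v(g,r)$, which is already written in the same basis. Since $\{h_1,h_2,h_3,h_4\}$ is a basis of $\mathbb{R}^2\otimes\mathbb{R}^2$, two vectors coincide if and only if their four coordinates agree. Equating the coefficient of each $h_j$ then yields precisely the four identities (\ref{productprobabilities01})--(\ref{productprobabilities02}), establishing the equivalence for the real representation in both directions at once.

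For the parenthetical claim on the complex representation I would repeat the same bilinear expansion on $w(e,p)\otimes w(f,q)$. The coefficient of $h_1$ becomes $\sqrt{\mu(H,e,p)\mu(G,f,q)}\,e^{i(\alpha(e,p)_H+\alpha(f,q)_G)}$, and analogously for $h_2,h_3,h_4$; comparing with (\ref{entangledvectorcomplex}), the moduli reproduce exactly the probability conditions above (each probability being the squared modulus of its amplitude), while the phases require $\alpha(g,r)_{HG}=\alpha(e,p)_H+\alpha(f,q)_G$ modulo $2\pi$ together with the three analogous relations. Because the phase angles $\alpha$ are free parameters of the construction and are not fixed by the measured frequencies, they can always be chosen to meet these additive relations once the probabilities factorize, so the probability conditions are indeed what is needed for $w(g,r)$ to be written as the product $w(e,p)\otimes w(f,q)$.

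The argument carries no genuine obstacle: it is entirely linear-algebraic, resting on the bilinearity of $\otimes$ together with the uniqueness of coordinates in a basis. The only point deserving care is the complex case, where factorization of the probabilities is necessary but the product form of $w(g,r)$ additionally constrains the relative phases; this is why I would phrase the complex statement as a consequence (\emph{and then also}) rather than as an independent biconditional.
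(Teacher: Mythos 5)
Your proposal is correct and follows essentially the same route as the paper: expand $v(e,p)\otimes v(f,q)$ (and $w(e,p)\otimes w(f,q)$) bilinearly in the canonical basis $\{h_1,h_2,h_3,h_4\}$ and compare coefficients with (\ref{entangledvectorreal}) and (\ref{entangledvectorcomplex}). You are in fact slightly more careful than the paper, which stops after displaying the two expansions and leaves the coefficient comparison implicit, and which does not address the phase relations $\alpha(g,r)_{HG}=\alpha(e,p)_H+\alpha(f,q)_G$ (mod $2\pi$) that you correctly identify as the extra freedom needed for the complex case.
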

\begin{proof} We have
\begin{eqnarray}
&v(e,p)\otimes v(f,q)=(\mu(H,e,p),\mu(B,e,p))\otimes(\mu(G,g,q),\mu(W,g,q)) \nonumber \\
&=\mu(H,e,p)\mu(G,g,q)h_1+\mu(H,e,p)\mu(W,g,q)h_2+\mu(B,e,p)\mu(G,g,q)h_3+\mu(B,e,p)\mu(W,g,q)h_4
\end{eqnarray}
Analogously, we have
\begin{eqnarray}
&w(e,p)\otimes w(f,q)=\sqrt{\mu(H,e,p)\mu(G,f,q)}e^{\alpha(e,p)_{H}}e^{\alpha(f,q)_{G}}h_1+\sqrt{\mu(H,e,p)\mu(W,f,q)}e^{\alpha(e,p)_{H}}e^{\alpha(f,q)_{W}}h_2 \nonumber \\
&+\sqrt{\mu(B,e,p)\mu(G,f,q)}e^{\alpha(e,p)_{B}}e^{\alpha(f,q)_{G}}h_3+\sqrt{\mu(B,e,p)\mu(W,f,q)}e^{\alpha(e,p)_{B}}e^{\alpha(f,q)_{W}}h_4 . 
\end{eqnarray}
\end{proof}
Let us now consider the data that we collected in the experiment described in \cite{aertssozzo2011}, and see that we encountered there an entangled state. From the 81 persons that participated in the experiment, there were 4 persons that choose {\it The Horse Growls}, 51 persons that choose {\it The Horse Whinnies}, 21 persons that choose {\it The Bear Growls}, and 5 persons that choose {\it The Bear Whinnies}. This leads to probabilities $\mu(HG,g,r)=0.05$, $\mu(HW,g,r)=0.63$, $\mu(BG,g,r)=0.26$ and $\mu(BW,g,r)=0.06$. This means that
\begin{eqnarray} \label{entangledvectorcomplex2}
&v(g,r)=0.05\ h_1+0.63\ h_2+0.26\ h_3+0.06\ h_4 \\ \label{entangledvectorcomplex3}
&w(g,r)=0.22\ e^{\alpha(g,r)_{HG}}h_1+ 0.79\ e^{\alpha(g,r)_{HW}}h_2+0.51\ e^{\alpha(g,r)_{BG}}h_3+0.25\ e^{\alpha(g,r)_{BW}}h_4 
\end{eqnarray} 
are the vectors that represent the state of the concept {\it The Animal Acts} with respect to this measurement and the collected data in $\mathbb{R}^4$ and $\mathbb{C}^4$, respectively. It is easy to check that the vectors in (\ref{entangledvectorcomplex2}) and (\ref{entangledvectorcomplex3}) represent a state that is not a product state in the sense that the probabilities corresponding to the joint measurement are not equal to the products of the probabilities corresponding to the component measurements. What is however much more conclusive with respect to the state of {\it The Animal Acts} being a state of entanglement, is that it can be proven that no component probabilities can possibly exist that give rise to the experimental values measured for the joint probabilities. This result is stated by means of the following theorem.

\begin{theorem} There do not exist numbers $a_1, a_2, b_1, b_2$ contained in the interval $[0,1]$, such that $a_1+a_2=1$, and $b_1+b_2=1$, and such that $a_1b_1=0.05$, $a_2b_1=0.63$, $a_1b_2=0.26$ and $a_2b_2=0.06$.
\end{theorem}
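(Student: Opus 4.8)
The plan is to argue by contradiction. Suppose numbers $a_1, a_2, b_1, b_2 \in [0,1]$ existed satisfying all four product equations. The key observation is that these four joint values are not independent: they are the entries of the rank-one array $(a_i b_j)$, and any such array must obey a single multiplicative consistency relation. Concretely, I would multiply together the two ``diagonal'' equations and, separately, the two ``off-diagonal'' equations, and note that both products evaluate to the same quantity $a_1 a_2 b_1 b_2$.

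Carrying this out, from $a_1 b_1 = 0.05$ and $a_2 b_2 = 0.06$ one gets $(a_1 b_1)(a_2 b_2) = a_1 a_2 b_1 b_2 = 0.05 \cdot 0.06 = 0.003$, whereas from $a_2 b_1 = 0.63$ and $a_1 b_2 = 0.26$ one gets $(a_2 b_1)(a_1 b_2) = a_1 a_2 b_1 b_2 = 0.63 \cdot 0.26 = 0.1638$. Since both left-hand sides are literally the same product $a_1 a_2 b_1 b_2$, consistency would force $0.003 = 0.1638$, which is false. This contradiction shows no such numbers exist, which proves the theorem.

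Two remarks on the structure of the argument seem worth recording. First, the contradiction uses only the four product equations; the normalization constraints $a_1 + a_2 = 1$ and $b_1 + b_2 = 1$ and the interval bounds $[0,1]$ are never invoked, which actually sharpens the conclusion: even without requiring normalized marginals, the measured joint probabilities simply cannot be written in product form. Second, the relation being exploited is exactly the non-vanishing of the $2 \times 2$ determinant $\mu(HG)\mu(BW) - \mu(HW)\mu(BG)$, i.e. the failure of the rank-one (product) condition underlying the previous theorem; so the present statement is precisely the assertion that this determinant is nonzero for the collected data.

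I do not anticipate a genuine obstacle here, since once the right invariant is identified the proof collapses to a single numerical check. The only point requiring mild care is the choice of that invariant: a more naive route, computing the ratio $a_1/a_2$ in two ways from the equations $a_1 b_1 = 0.05$, $a_2 b_1 = 0.63$ and from $a_1 b_2 = 0.26$, $a_2 b_2 = 0.06$, would force a side case analysis whenever a denominator could vanish. The cross-multiplication form sidesteps any division by a possibly-zero quantity, so I would adopt it precisely to keep the argument a clean, division-free contradiction.
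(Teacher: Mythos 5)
Your proof is correct, and it takes a genuinely different route from the paper's. The paper argues through the normalization constraint: from $a_2b_1=0.63$ it writes $(1-a_1)b_1=0.63$ and concludes $a_1b_1=1-0.63=0.37$, contradicting $a_1b_1=0.05$. Note that as printed this step is only valid if $b_1=1$ (in general $(1-a_1)b_1=0.63$ gives $a_1b_1=b_1-0.63$, not $1-0.63$); the sound version of that strategy is to sum rows and columns, obtaining $b_1=0.05+0.63=0.68$, $a_1=0.05+0.26=0.31$, and then $a_1b_1=0.2108\neq 0.05$. Your argument instead uses the rank-one consistency relation $(a_1b_1)(a_2b_2)=(a_1b_2)(a_2b_1)$, checking $0.05\cdot 0.06=0.003\neq 0.1638=0.63\cdot 0.26$. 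What your route buys is exactly what you point out: it never invokes $a_1+a_2=1$, $b_1+b_2=1$, or the interval bounds, so it establishes the stronger claim that the data admit no product decomposition with any nonnegative factors, and it is division-free so no case analysis on vanishing denominators is needed. What the paper's (corrected) route buys is that it exhibits the unique candidate marginals forced by normalization and shows they fail, which is slightly more informative about \emph{how far} the data are from a product state; but as a proof of the stated theorem your determinant criterion is the cleaner and more robust choice.
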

\begin{proof} Let us suppose that such numbers do exist. From $a_2b_1=0.63$ follows that $(1-a_1)b_1=0.63$, and hence $a_1b_1=1-0.63=0.37$. This is in contradiction with $a_1b_1=0.05$.
\end{proof}
It is important to observe that in case we do not have the equalities (\ref{productprobabilities01}) and (\ref{productprobabilities02}) for the probabilities satisfied, and hence are in a situation of entanglement, we can model this within the $\mathbb{R}^2 \otimes \mathbb{R}^2$ tensor product space, and also in the $\mathbb{C}^2 \otimes \mathbb{C}^2$ tensor product space. It is just that in this case the vectors $v(g,r)$ and $w(g,r)$ will not be product vectors, but entangled vectors, i.e. the sum of product vectors, as can be seen in (\ref{entangledvectorreal}) and (\ref{entangledvectorcomplex2}). We also recall that we do not need any linear structure at all for the global set of states $\Sigma$, it is only the representation of this set of states due to the representation theorem \ref{theoremrepresentation} presented in section \ref{sectionrepresentation}, and proven in \cite{aertssozzo2012a}, which is a space of real numbers or contains a linear structure as a complex space. But, what is most important of all to recall is that this `local contextual real-space or complex-linear structure' can always be realized independent of the entity and situation considered. The analogy with how general relativity has been mathematically constructed as a generalization of special relativity can now be very well illustrated. Indeed, the real-space or linear structure is only local, for a fixed set of outcomes. Therefore, the formalism we propose is a generalization of standard quantum mechanics in the sense that, when the real space representation is used, no linearity at all is involved, and when the complex space representation is used, linearity is present only locally. Moreover, even in the latter representation, it is not necessarily the case that also globally the set of states can be made into a linear vector space. Only when this can be done, hence when all the local linearities join into one global linearity, the formalism we propose reduces to the standard quantum theoretical formalism. Another way of expressing the above is that Quantum Model Theory is realized by means of a `contextual linear formalism'.

\subsection{Entanglement of General Entities}
The real and complex representations of a state of a compound entity in terms of the corresponding representations of the states of the component entities that we have constructed in Sec. \ref{ent_conc} for two concepts can be extended to two general entities. In the following theorem we make this construction and indicate how entangled states can be identified.

\begin{theorem}
Entangled states can be identified for general compound entities modeled in QMod 
\end{theorem}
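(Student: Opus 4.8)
The plan is to generalize, essentially verbatim, the construction of Section \ref{ent_conc}, replacing the two binary measurements there by two arbitrary measurements on two arbitrary entities. First I would take two general entities $S_A$ and $S_B$, a measurement $e$ on $S_A$ in state $p$ with outcomes $\{x_1, \ldots, x_n\}$, and a measurement $f$ on $S_B$ in state $q$ with outcomes $\{y_1, \ldots, y_{n'}\}$. Applying Theorem \ref{theoremrepresentation} to each component separately yields the real representations $v(e,p) = \sum_{i=1}^n \mu(x_i,e,p)\, h_i \in \mathbb{R}^n$ and $v(f,q) = \sum_{j=1}^{n'} \mu(y_j,f,q)\, k_j \in \mathbb{R}^{n'}$, together with their complex counterparts $w(e,p) \in \mathbb{C}^m$ and $w(f,q) \in \mathbb{C}^{m'}$ satisfying \ref{quantumsolution}.

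Next I would describe the compound entity $S_{AB}$ in its ground state $r$ together with the coincidence measurement $g$ whose outcome set is the product $\{(x_i, y_j) \mid 1 \le i \le n, \ 1 \le j \le n'\}$, so that $g$ has $nn'$ outcomes with probabilities $\mu(x_iy_j, g, r)$. Applying Theorem \ref{theoremrepresentation} once more to $S_{AB}$ in state $r$ under $g$ produces a real representation in $\mathbb{R}^{nn'}$ and a complex one in a suitable $\mathbb{C}^{M}$. The key algebraic input is then the canonical isomorphisms $\mathbb{R}^{nn'} \cong \mathbb{R}^n \otimes \mathbb{R}^{n'}$ and $\mathbb{C}^{mm'} \cong \mathbb{C}^m \otimes \mathbb{C}^{m'}$, under which the basis vector indexed by the outcome $(x_i, y_j)$ is identified with $h_i \otimes k_j$. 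Relative to these identifications the joint state vectors read $v(g,r) = \sum_{i,j} \mu(x_iy_j,g,r)\, h_i \otimes k_j$, and analogously for $w(g,r)$.

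I would then define a \emph{product state} to be one for which $v(g,r) = v(e,p) \otimes v(f,q)$ (respectively $w(g,r) = w(e,p) \otimes w(f,q)$), and an \emph{entangled state} to be any state of $S_{AB}$ that is not of this product form. Expanding the tensor product exactly as in the two-concept computation of Section \ref{ent_conc} shows that $v(g,r)$ is a product state if and only if the factorization $\mu(x_iy_j, g, r) = \mu(x_i,e,p)\, \mu(y_j, f, q)$ holds for every pair $(i,j)$. Hence, to identify an entangled state it suffices to exhibit joint probabilities $\{\mu(x_iy_j,g,r)\}$ for which no marginal distributions $\{a_i\}, \{b_j\}$ with $\sum_i a_i = \sum_j b_j = 1$ satisfy $\mu(x_iy_j,g,r) = a_i b_j$ for all $i,j$. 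Equivalently, every joint probability matrix $[\mu(x_iy_j,g,r)]$ of rank at least two witnesses an entangled state, and the impossibility argument given for the Horse--Bear/Growls--Whinnies data is one concrete instance of this rank criterion.

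The step I expect to be the main obstacle is the complex case rather than the real one. Theorem \ref{theoremrepresentation} guarantees only the \emph{existence} of a complex representation, with the dimension allowed to range over an interval and with no canonical choice of the spectral family $\{M_k\}$. To make the tensor identification $\mathbb{C}^{mm'} \cong \mathbb{C}^m \otimes \mathbb{C}^{m'}$ respect the coincidence measurement $g$, one must verify that the product spectral family $\{M_k \otimes M_l\}$ (or a coarsening of it compatible with the outcome pairing of $g$) is again an admissible spectral family on the tensor space, and that the solution \ref{quantumsolution} for $S_{AB}$ can be chosen compatibly with the component solutions. Once this compatibility of spectral families under the tensor product is checked, the equivalence between product states and factorizing probabilities — and therefore the identifiability of entangled states via the rank criterion — follows immediately from the same computation as in the two-concept example.
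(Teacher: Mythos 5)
Your proposal follows essentially the same route as the paper's proof: represent the joint measurement in the tensor product space $\mathbb{R}^n\otimes\mathbb{R}^{n'}$ (resp.\ its complex analogue), show that the joint vector equals $v(e,p)\otimes v(f,q)$ iff the joint probabilities factorize as $\mu((x_j,y_k),g,r)=\mu(x_j,e,p)\mu(y_k,f,q)$, and declare the state entangled when this fails. Your extras --- allowing different outcome cardinalities, the rank-one criterion for factorizability, and the concern about compatibility of the spectral families under the tensor product --- refine points the paper passes over by simply invoking ``reasoning as in Th.\ 2,'' but they do not change the argument.
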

\begin{proof}
Let $S$ and $T$ be two entities in the states $p$ and $q$, respectively, and let the measurements $e$ and $f$ be performed on $S$ and $T$, respectively. Suppose that $\{x_1,\ldots, x_n\}$ is the set of outcomes of $e$ and $\{y_1,\ldots y_n\}$ is the set of outcomes of $f$, and denote by $\mu(x_j,e,p)$, $\mu(y_k,f,q)$ the corresponding probabilities. Finally, let 
\begin{eqnarray}
v(e,p)=(\mu(x_1,e,p),\ldots, \mu(x_n,e,p)) \\ v(f,q)=(\mu(y_1,f,q),\ldots,\mu(y_n,f,q))
\end{eqnarray}
\begin{eqnarray}
w(e,p)=(\sqrt{\mu(x_1,e,p)}e^{\alpha(e,p)_1},\ldots, \sqrt{\mu(x_n,e,p)}e^{\alpha(e,p)_n}) \\ w(f,q)=(\sqrt{\mu(y_1,f,q)}e^{\alpha(f,q)_1},\ldots, \sqrt{\mu(y_n,f,q)}e^{\alpha(f,q)_n})
\end{eqnarray}
be the contextual representations of $(e,p)$ and $(f,q)$ in ${\mathbb R}^{n}$ and ${\mathbb C}^{n}$, respectively. Finally, let $U$ be the compound entity made up of $S$ and $T$, in the state $r$. Let the measurement of $g$ on $U$ consisting of a measurement of $e$ on $S$ and $f$ on $T$ so that the set of possible outcomes of $g$ is $\{(x_1,y_1),\ldots, (x_j,y_k)\ldots, (x_n,y_n)\}$, and the set of corresponding probabilities $\{\mu((x_j,y_k),g,r)\}$. By repeating the procedure of Sec. \ref{ent_conc}, we can write
\begin{eqnarray} \label{entangledvectorreal_gen}
&v(g,r)=\sum_{j,k}\mu((x_j,y_k),g,r)h_{jk} \\ \label{entangledvectorcomplex_gen}
&w(g,r)=\sum_{jk}\sqrt{\mu((x_j,y_k),g,r)}e^{\alpha(g,r)_{jk}}h_{jk}
\end{eqnarray}
where $\{h_{kj}\ \vert k,j\in\{1,\ldots,n\}\}$ is the canonical base of ${\mathbb R}^{n}\otimes{\mathbb R}^{n}$, which is a $n^2$ dimensional real space, hence isomorphic to ${\mathbb R}^{n^2}$. Moreover, reasoning as in Th. 2, we get that $v(g,r)=v(e,p)\otimes v(f,q)$ and $w(g,r)=w(e,p)\otimes w(f,q)$) iff the probabilities satisfy
\begin{equation} \label{productprobabilities}
\mu((x_j,y_k),g,r)=\mu(x_j,e,p)\mu(y_k,f,q)
\end{equation}
In case (\ref{productprobabilities}) is not satisfied, $r$ is an entangled state.
\end{proof}

\subsection{Entanglement of two vessels of water\label{ent_vessels}}
Let us come to the second example. We consider two vessels of water, each containing a volume of water, between 0 and 20 liters. Suppose that we are in a situation where we lack knowledge about the exact volume contained in each vessel. We call the state of the left vessel $p$ and the state of the right vessel $q$. We consider measurements $e$ and $f$ for the left and right vessel respectively, that consist in pouring out the water by means of a siphon, collecting it in reference vessels, where we can read of the volume of collected water. We attribute outcome $M$ if the volume is more than 10 liters and the outcome $L$ if it is less than 10 liters. We introduce the probabilities $\mu(M,e,p)$ and $\mu(L,e,p)$ for the outcomes $M$ and $L$ of $e$ on the left vessel, and the probabilities $\mu(M,f,q)$ and $\mu(L,f,q)$ for the outcomes $M$ and $L$ of $f$ on the right vessel. 

We then consider the joint entity consisting of the two vessels of water and denote the state of this joint entity by $r$. The measurement $g$ consists in pouring out the water of the left vessel with the siphon, and also of the right vessel, with another siphon. Volumes of water are collected at left and at right in two reference vessels, and four outcomes are considered $\{MM, LM, ML, LL\}$. The outcome $MM$ corresponds to left as well as right vessel giving rise to the collection of more than 10 liters, and outcome $LL$ corresponds to left as well as right vessel giving rise to the collection of less than 10 liters. The other two outcomes $ML$ ($LM$) correspond to the left vessel giving rise to more (less) than 10 liters and right vessel giving rise to less (more) than 10 liters. The probabilities $\{\mu(MM,g,r), \mu(LM,g,r), \mu(ML,g,r), \mu(LL,g,r)\}$ correspond to these four outcomes. Obviously, if nothing extra happens between the two vessels, the joint probabilities will be product probabilities, which means that we have
\begin{eqnarray}
&&\mu(MM,g,r)=\mu(M,e,p)\mu(M,f,q) \quad \mu(LM,g,r)=\mu(L,e,p)\mu(M,f,q) \\
&&\mu(ML,g,r)=\mu(M,e,p)\mu(L,f,q) \quad \mu(LL,g,r)=\mu(L,e,p)\mu(L,f,q)
\end{eqnarray}
This shows that there is no entanglement, and that in the local contextual model in $\mathbb{R}^2 \otimes \mathbb{R}^2$ and $\mathbb{C}^2 \otimes \mathbb{C}^2$, we can represent the state $r$ by means of product states $v(e,p)\otimes v(f,q)$ and $w(e,p)\otimes w(f,q)$. 

Let us propose a situation which is more concretely defined, and allows us to derive some numerical values for the probabilities. Thus, we suppose that, for each vessel, the lack of knowledge about the volume of the water contained in the vessel, is evenly distributed. As a consequence of this extra hypothesis, the numerical values for all the probabilities are determined from reasons of symmetry, and we have
\begin{eqnarray}
&&\mu(M,e,p)=\mu(L,e,p)=\mu(M,f,q)=\mu(L,f,q)={1 \over 2} \\ \label{values01}
&&\mu(MM,g,r)=\mu(LM,g,r)=\mu(ML,g,r)=\mu(LL,g,r)={1 \over 4} .
\end{eqnarray} 
We want to consider now another state of the two vessels, and show that this new state is entangled. It is a state where we connect the two vessels of water by a tube, such that they form `connected vessels of water', and we put exactly 20 liters of water in the whole of the connected vessels. Let us denote this state by $s$. Knowing that the measuring of the volume of each vessel consist of pouring out the water by a siphon, for the state $s$, we find that the volume of both vessels, i.e. the water being collected by the siphons, is strictly correlated. Indeed, if we find less than 10 liters in the left vessel, we find more than 10 liters in the right vessel, and vice versa. This means that we never get outcome $MM$ and $LL$, and hence we have $0=\mu(MM,g,s)=\mu(LL,g,s)$, while $1=\mu(ML,g,s)+\mu(LM,g,s)$. Let us investigate whether $s$ is an entangled state. To this aim, we suppose that $s$ is a product state, and see what follows from this hypothesis. If $s$ is a product state we have
\begin{eqnarray}
0=\mu(M,e,p)\mu(M,f,q)=\mu(L,e,p)\mu(L,f,q)
\end{eqnarray}
which implies that $\mu(M,e,p)$ and $\mu(L,f,q)=0$ or $\mu(M,f,q)$ and $\mu(L,e,p)=0$. Hence, this means that the left vessel contains with certainty less than 10 liters, and the right vessel contains with certainty more than 10 liters, or vice versa. Suppose we have $\mu(M,e,p)$ and $\mu(L,f,q)=0$. Then $\mu(L,e,p)=1$ and $\mu(M,f,q)=1$, but hence $\mu(LM,g,r)=\mu(L,e,p)\mu(M,f,q)=1$ and $\mu(ML,g,r)=0$. This is only possible if the siphon of the right vessel would pour out no water at all, and all the water would be poured out by the siphon of the left vessel. This is very improbably, not to say impossible, and hence in case of a realistic situation we have both $\mu(LM,g,r)$ and $\mu(ML,g,r)$ different from zero, which means that $s$ is an entangled state.

Let us again introduce an extra hypothesis that will allow us to derive numerical values for the probabilities in the state $s$, and prove that $s$ is entangled. Thus, we suppose that both siphons are chosen at random to be applied to the left or to the right, and also all other parameters involved in applying the siphons are chosen at random, e.g. the starting time of siphoning is at random. In this case, we have probability one half that the left siphon will pour out more than 10 liters -- and in this case the right siphon pours out less than 10 liters -- and probability one half that the right siphon will pour out more than 10 liters of water -- and in this case the left siphon pours out less than 10 liters. This means that
\begin{equation} \label{values02}
\mu(ML,g,s)=\mu(LM,g,s)={1 \over 2} .
\end{equation}
If we compare (\ref{values01}) with (\ref{values02}), we see that if the extra hypothesis is satisfied, the state $s$ is not a product state. Hence $s$ is an entangled state. Again, like in the case of the example {\it The Animal Acts}, we can show that no component probabilities can exist to give rise to these joint probabilities.
\begin{theorem} There do not exist numbers $a_1, a_2, b_1, b_2$ contained in the interval $[0,1]$, such that $a_1+a_2=1$, and $b_1+b_2=1$, and such that $a_1b_1=0$, $a_2b_1=0.5$, $a_1b_2=0.5$ and $a_2b_2=0$.
\end{theorem}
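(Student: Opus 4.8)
The plan is to argue by contradiction, exactly mirroring the structure of the analogous non-factorisation theorem for \emph{The Animal Acts} proved above. I would assume that numbers $a_1, a_2, b_1, b_2 \in [0,1]$ exist satisfying $a_1 + a_2 = 1$, $b_1 + b_2 = 1$, together with the four product equations $a_1 b_1 = 0$, $a_2 b_1 = 0.5$, $a_1 b_2 = 0.5$, $a_2 b_2 = 0$, and then derive an impossibility.

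The quickest route exploits positivity. First I would observe that $a_2 b_1 = 0.5 > 0$ forces both $a_2 > 0$ and $b_1 > 0$, and similarly $a_1 b_2 = 0.5 > 0$ forces both $a_1 > 0$ and $b_2 > 0$. Hence all four of $a_1, a_2, b_1, b_2$ are strictly positive, so in particular the product $a_1 b_1$ is strictly positive, contradicting the required equation $a_1 b_1 = 0$. This closes the argument in one line and uses only the sign information carried by the data.

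An alternative I would keep in reserve, and which matches more closely the summation device used in the earlier proof, is to add the equations in pairs. Summing $a_1 b_1 + a_2 b_1 = (a_1 + a_2) b_1 = b_1$ gives $b_1 = 0 + 0.5 = 0.5$, and the companion sum gives $b_2 = 0.5$; summing along the other index gives $a_1 = a_2 = 0.5$. Substituting back then yields $a_1 b_1 = 1/4 \neq 0$, the same contradiction. This version has the mild advantage of exhibiting explicitly the unique candidate factorisation that the marginals would force.

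There is essentially no real obstacle here: the statement is an elementary incompatibility of the joint data $\{0, 0.5, 0.5, 0\}$ with any product structure, and either argument disposes of it at once. The only point worth being careful about is to invoke the strictness $0.5 > 0$ rather than merely the marginal sums, since it is precisely the coexistence of a vanishing entry $a_1 b_1 = 0$ with the two strictly positive off-diagonal entries that encodes the perfect anticorrelation, and hence the entanglement, of the connected-vessels state $s$.
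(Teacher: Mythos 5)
Your proposal is correct, and in fact both of your arguments are sound, but neither coincides with the route the paper takes. The paper's own proof substitutes $a_2=1-a_1$ into $a_2b_1=0.5$ to get $(1-a_1)b_1=0.5$ and then asserts ``hence $a_1b_1=1-0.5=0.5$,'' contradicting $a_1b_1=0$. As written that inference is only valid if $b_1=1$: from $(1-a_1)b_1=0.5$ one actually gets $a_1b_1=b_1-0.5$, so the paper's one-line deduction has a gap (the same gap appears in its proof of the analogous theorem for \emph{The Animal Acts}, so your remark that your summation device ``matches the earlier proof'' is not quite accurate --- the earlier proof uses this same $(1-a_1)b_1$ substitution, not marginal sums). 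Your first argument --- $a_2b_1>0$ and $a_1b_2>0$ force all four numbers to be strictly positive, whence $a_1b_1>0$ --- is the cleanest correct proof and does not even use the normalization constraints. Your second argument, summing to recover the marginals $a_1=a_2=b_1=b_2=0.5$ and checking $a_1b_1=0.25\neq 0$, is also correct and has the added value of exhibiting the unique factorization the marginals would force, which connects directly to the perfect-anticorrelation interpretation of the connected-vessels state $s$. Either of your arguments would be a strengthening of the paper's proof rather than a mere restatement of it.
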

\begin{proof} Let us suppose that such numbers do exist. From $a_2b_1=0.5$ follows that $(1-a_1)b_1=0.5$, and hence $a_1b_1=1-0.5=0.5$. This is in contradiction with $a_1b_1=0$.
\end{proof}
The entangled states that we identity in the way shown above do not contain already the best known characteristic of entanglement, namely the violation of Bell-type inequalities. The reason for this is that locally, hence if only one measurement context is considered, Bell-type inequalities cannot even be defined. Different measurement contexts need to be confronted with each other to come to an investigation of the violation of Bell-type inequalities. In \cite{aertssozzo2011} we show that for the data with respect to the combination of concepts {\it Animal} and {\it Acts}, in effect, also Bell-type inequalities are violated in case more measurement contexts are considered for this entity {\it The Animal Acts}. That the vessel of water example also violates Bell-type inequalities of more measurement contexts are considered was shown by one of the authors in earlier work \cite{aerts1982,aerts1991}. In forthcoming work we will show how the consideration of different contexts on QMod allows the identification of compatibility and non compatibility, again without the necessity of linearity. It will also be proven that the violation of Bell-type inequalities is due to the presence of both aspects entanglement and non compatibility.

\vspace*{-0.2cm}

\end{document}